\documentclass[11pt]{article}
\usepackage{thmtools}
\usepackage{etex}
\usepackage{geometry}\geometry{left=2.5cm,right=2.5cm,top=2.5cm,bottom=2.5cm}
\usepackage{amssymb,amsmath,amscd}
\usepackage{amsthm}
\usepackage{mathtools}
\usepackage{dsfont}
\usepackage[numbers]{natbib}
\usepackage{booktabs}
\usepackage{url}
\usepackage[toc,page]{appendix}
\usepackage[bb=boondox]{mathalfa}
\usepackage{fancybox}

\usepackage[utf8]{inputenc}
\usepackage{cmbright}
\usepackage[T1]{fontenc}
\usepackage{lmodern}

\usepackage[usenames,dvipsnames,svgnames,table]{xcolor}
\usepackage{tikz}
\usetikzlibrary{arrows,shapes,positioning}
\usepackage{tikz-cd}
\usepackage[font={small,it}]{caption}
\usepackage{multirow}
\usepackage[colorinlistoftodos]{todonotes}

\newcommand*\circled[1]{\tikz[baseline=(char.base)]{
                \node[shape=circle,draw,inner sep=1.5pt] (char) {#1};}}
\newcommand*\mybox[1]{\tikz[baseline=(char.base)]{
                \node[shape=rectangle,draw,inner sep=1.5pt] (char) {$\substack{#1}$};}}

\usepackage{float}

\usepackage{listings}
\usepackage{subcaption}
\usepackage{enumerate}

\usepackage{authblk}

\usepackage{hyperref}
\usepackage{cleveref}

\newtheorem{proposition}{Proposition}
\crefname{proposition}{Proposition}{Propositions}
\Crefname{proposition}{Proposition}{Propositions}

\crefname{theorem}{Theorem}{Theorems}
\Crefname{theorem}{Theorem}{Theorems}
\newtheorem{lemma}{Lemma}
\crefname{lemma}{Lemma}{Lemmas}
\Crefname{lemma}{Lemma}{Lemmas}

\crefname{corollary}{Corollary}{Corollaries}
\Crefname{corollary}{Corollary}{Corollaries}

\theoremstyle{definition}
\crefname{definition}{Definition}{Definitions}
\Crefname{definition}{Definition}{Definitions}
\newtheorem{definition}{Definition}
\newtheorem{question}{Question}
\crefname{question}{Question}{Questions}
\Crefname{question}{Question}{Questions}
\newtheorem{condition}{Condition}
\crefname{condition}{Condition}{Conditions}
\Crefname{condition}{Condition}{Conditions}
\crefformat{equation}{\textup{#2(#1)#3}}

\newtheoremstyle{exampstyle}
  {\topsep} % Space above
  {\topsep} % Space below
  {} % Body font
  {} % Indent amount
  {\bfseries\itshape} % Theorem head font
  {.} % Punctuation after theorem head
  {.5em} % Space after theorem head
  {} % Theorem head spec (can be left empty, meaning `normal')

\theoremstyle{exampstyle}

\newcommand{\B}{\mathbb{B}}
\newcommand{\0}{\mathbf{0}}
\newcommand{\C}{\mathcal{C}}
\newcommand{\F}{\mathcal{F}}
\newcommand{\Q}{\mathcal{Q}}

\title{Local negative circuits and cyclic attractors\\in Boolean networks with at most five components}
\author[1]{Elisa Tonello}
\author[1]{Etienne Farcot}
\author[2]{Claudine Chaouiya}
\affil[1]{School of Mathematical Sciences, University of Nottingham, Nottingham, NG7 2RD, UK}
\affil[2]{Instituto Gulbenkian de Ciência, Oeiras, Portugal}
\date{}

\begin{document}

\maketitle

\begin{abstract}
  We consider the following question on the relationship between
  the asymptotic behaviours of asynchronous dynamics of Boolean networks and their regulatory structures:
  does the presence of a cyclic attractor imply the existence of a local negative circuit in the regulatory graph?
  When the number of model components $n$ verifies $n \geq 6$, the answer is known to be negative.
  We show that the question can be translated into a Boolean
  satisfiability problem on $n \cdot 2^n$ variables.
  A Boolean formula expressing the absence of local negative circuits
  and a necessary condition for the existence of cyclic attractors
  is found unsatisfiable for $n \leq 5$.
  In other words, for Boolean networks with up to $5$ components, the presence
  of a cyclic attractor requires the existence of a local negative circuit.
\end{abstract}

\section{Introduction}

Boolean networks are used to model the dynamics resulting from the interactions between $n$ regulatory components
that can assume only two values, $0$ and $1$, and are therefore
naturally described as maps from $\{0,1\}^n$ to itself.
Any such map uniquely identifies an \emph{asynchronous dynamics},
which requires at most one component to change at each step.
A regulatory graph defined by a Boolean network is a graph with one node for each regulatory component,
and directed, signed edges that represent regulatory interactions.
A regulation from a component to another might be observable only at certain states.
Therefore, for each state of the system, a \emph{local} regulatory graph is defined by considering
only the regulations that can be observed at that state.

Since the explicit construction and analysis of asynchronous dynamics is generally impractical,
the capability of regulatory structures to inform about the network dynamics has been often investigated.
In particular, relationships have been established between the presence of circuits in regulatory graphs
and the asymptotic asynchronous behaviours of Boolean networks.
In absence of regulatory circuits, the dynamics always reaches a unique fixed point~\cite{shih2005combinatorial},
whereas local positive circuits are required for multistationarity~\cite{remy2008graphic,richard2007necessary}
and negative circuits for oscillations~\cite{remy2008graphic,richard2010negative}.
Here we consider the following question (for studies addressing related issues, see for example~\cite{remy2008graphic,richard2010negative,richard2011local,richard2013kernels,ruet2017negative}):
\begin{question}\label{q1}
  Does the presence of a cyclic attractor imply the existence of a
  negative circuit in a local regulatory graph?
\end{question}

A counterexample for the multilevel case, i.e., where the discrete variables can take their
values in a broader range than $\{0,1\}$, was presented by Richard~\cite{richard2010negative}.
Recently, a number of counterexamples have been identified for the Boolean setting.
Ruet~\cite{ruet2017negative} exhibited a procedure to create counterexamples in the Boolean case, for
every $n\geq 7$, $n$ being the number of variables; these are maps admitting an antipodal attractive cycle
and no local negative circuits in the regulatory graph.
Tonello~\cite{tonello2017ontheconversion} and Faur{\'e} and Kaji~\cite{faure2018circuit}
identified different Boolean versions of Richard's discrete example,
that provide counterexamples to~\cref{q1} for $n=6$.
A map with an antipodal attractive cycle and no local regulatory circuits
also exists for $n=6$ (we present such a map in~\cref{appendix:antip6}).

\cref{q1} remains open for $n \leq 5$.
Even for such a small number of components, the range of possible dynamical behaviours is vast,
and connections between the network regulatory structure and its associated dynamics are not immediate.
However, answers to problems such as the one described in \cref{q1} clarify general rules and
can provide guidance, for instance, to gene network modellers seeking to capture a certain dynamical behaviour.

In this work, we describe how \cref{q1} can be translated into a Boolean satisfiability problem (SAT).
To this end, for a fixed number $n$ of regulatory components,
we consider $n \cdot 2^n$ Boolean variables, representing the values taken by the $n$ components
of the Boolean map on the $2^n$ states in $\{0,1\}^n$.
We then describe how the features referred to in~\cref{q1}
can be encoded as Boolean expressions on the $n \cdot 2^n$ variables.
More precisely, we define a Boolean formula that encodes both the absence of local negative circuits
and a necessary condition for the presence of a cyclic attractor.
In addition, we reduce the search space by exploiting symmetries of regulatory networks,
so that, for small $n$, the problem can be analysed by a satisfiability solver in a few hours.
The solver finds the formula unsatisfiable for $n \leq 5$,
and provides further examples for $n=6$.

The relevant definitions and background are introduced in~\cref{sec:background}, whereas~\cref{sec:bsp}
is dedicated to recasting~\cref{q1} as a Boolean satisfiability problem.
We discuss our results in~\cref{sec:conclusion}.

\section{Background}\label{sec:background}
In this section, we fix some notations and introduce the main definitions.
We denote by $\B$ the set $\{0,1\}$, and consider $n \in \mathbb{N}$.
The elements of $\B^n$ are also called \emph{states}.
The state $x \in \B^n$ with $x_i=0$, $i=1,\dots,n$ will be denoted $\0$.
Given $x \in \B^n$ and a set of indices $I \subseteq \{1,\dots,n\}$,
we denote by $\bar{x}^I$ the state that satisfies
$\bar{x}^I_i=1-x_i$ for $i \in I$, and $\bar{x}^I_i=x_i$ for $i\notin I$.
If $I=\{i\}$ for some $i$, we simply write $\bar{x}^i$ for $\bar{x}^{\{i\}}$.
Given two states $x,y \in \B^n$, $d(x,y)$ denotes the Hamming distance between $x$ and $y$.
We call $n$-dimensional \emph{hypercube graph} the directed graph on $\B^n$
with an edge from $x \in \B^n$ to $y \in \B^n$ whenever $d(x,y)=1$.

A Boolean network is defined by a map $f\colon\B^n \rightarrow \B^n$.
The dynamical system defined by $f$ is also referred to as the \emph{synchronous dynamics}.
The \emph{asynchronous state transition graph} or \emph{asynchronous dynamics} $AD_f$
defined by $f$ is a graph on $\B^n$ with an edge from $x \in \B^n$ to $\bar{x}^i$
for all $i \in \{1,\dots,n\}$ such that $f_i(x)\neq x_i$.
We write $(x,y)$ for the edge (transition) from $x$ to $y$.

A non-empty subset $D \subseteq \B^n$ is \emph{trap domain} for $AD_f$ if,
for every edge $(x, y)$, $x \in D$ implies $y \in D$.
The minimal trap domains with respect to the inclusion are called \emph{attractors}
for the dynamics of the network.
Attractors that consist of a single state are called \emph{fixed points} or \emph{stable states};
the other attractors are referred to as \emph{cyclic attractors}.

Boolean networks are used to model the interactions between regulatory components. 
The interactions are derived from a Boolean map $f$ as follows.
For each state $x \in \B^n$, we define the \emph{local regulatory graph} $G_f(x)$ of $f$ at $x \in \B^n$
as a labelled directed graph with $\{1,\dots,n\}$ as set of nodes.
The graph $G_f(x)$ contains an edge from node $j$ to node $i$, also called \emph{interaction}
between $j$ and $i$, when $f_i(\bar{x}^j)\neq f_i(x)$;
the edge is represented as $j \rightarrow i$ and is labelled with $s=(\bar{x}^j_j-x_j) \cdot (f_i(\bar{x}^j)-f_i(x))$.
The label $s$ is also called the \emph{sign} of the interaction,
and accounts for the regulatory effect of $j$ upon $i$ at the state $x$.

The \emph{global regulatory graph} $G_f$ of $f$ is the multi-directed labelled graph on $\{1,\dots,n\}$
that contains an edge $j\rightarrow i$ of sign $s$ if
there exists a state for which the local regulatory graph contains an interaction $j\rightarrow i$ of sign $s$.
In the global regulatory graph parallel edges are permitted to account for different
regulatory effects that can be observed at different states.

The sign of a path $i_1\rightarrow i_2\rightarrow \dots \rightarrow i_k$ in a regulatory graph is 
defined as the product of the signs of its edges.
A \emph{circuit} in a regulatory graph is a path $i_1\rightarrow i_2\rightarrow \dots \rightarrow i_k$
with $i_1=i_k$ and such that the indices $i_1,\dots,i_{k-1}$ are all distinct.
We recall a useful result which can be found in~\cite[Remark 1]{richard2011local} and~\cite[Lemma 5.2]{ruet2016local}.

\begin{lemma}\label{lemma:sign_circuit_parity}
  Let $C$ be a circuit of $G_f(x)$ with set of vertices $I$.
  If the cardinality of $\{i \in I | \ f_i(x) \neq x_i\}$ is even (resp., odd),
  then $C$ is a positive (resp.\ negative) circuit.
\end{lemma}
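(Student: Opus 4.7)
The plan is a direct computation that rewrites each edge sign in a form that telescopes nicely around the closed circuit. First I would unpack the definition: for an edge $j \to i$ in $G_f(x)$ the sign is $(\bar{x}^j_j - x_j)(f_i(\bar{x}^j) - f_i(x))$. Observe that $\bar{x}^j_j - x_j = 1 - 2x_j = (-1)^{x_j}$, and because the edge exists we have $f_i(\bar{x}^j) \neq f_i(x)$, so $f_i(\bar{x}^j) - f_i(x) = 1 - 2f_i(x) = (-1)^{f_i(x)}$. Hence the sign of the edge $j \to i$ simplifies to $(-1)^{x_j + f_i(x)}$.

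Next I would write $C$ as $i_1 \to i_2 \to \dots \to i_k$ with $i_k = i_1$ and multiply the signs along $C$:
\begin{equation*}
\mathrm{sign}(C) = \prod_{j=1}^{k-1} (-1)^{x_{i_j} + f_{i_{j+1}}(x)} = (-1)^{\sum_{j=1}^{k-1}\bigl(x_{i_j} + f_{i_{j+1}}(x)\bigr)}.
\end{equation*}
The key step is reindexing: since $i_k = i_1$ and $i_1,\dots,i_{k-1}$ are pairwise distinct, the multisets $\{i_1,\dots,i_{k-1}\}$ and $\{i_2,\dots,i_k\}$ both coincide with $I$. Therefore the exponent equals $\sum_{i \in I}\bigl(x_i + f_i(x)\bigr)$.

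Finally, I would note that $x_i + f_i(x)$ is odd precisely when $f_i(x) \neq x_i$, so the parity of $\sum_{i \in I}(x_i + f_i(x))$ is the parity of $|\{i \in I \mid f_i(x) \neq x_i\}|$. The claim follows: an even cardinality yields $\mathrm{sign}(C) = +1$, and an odd cardinality yields $\mathrm{sign}(C) = -1$.

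There is no real obstacle here, as the result is a short bookkeeping argument. The only point one must be a little careful about is the reindexing step, which relies on the circuit being closed ($i_k = i_1$) and on the vertices $i_1,\dots,i_{k-1}$ being distinct, so that each element of $I$ is counted exactly once in each of the two sums.
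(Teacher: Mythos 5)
Your proof is correct. Note that the paper itself does not prove this lemma---it is recalled from the literature (Remark~1 of Richard~2011 and Lemma~5.2 of Ruet~2016)---so there is no in-paper argument to compare against; your computation, reducing each edge sign to $(-1)^{x_j+f_i(x)}$ and letting the exponents telescope around the closed circuit so that the total parity becomes that of $\#\{i\in I\mid f_i(x)\neq x_i\}$, is the standard argument, and you correctly flag the one delicate point (the reindexing uses $i_k=i_1$ and the distinctness of $i_1,\dots,i_{k-1}$).
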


\subsection{Regulatory circuits and asymptotic behaviours}
Following R. Thomas early conjectures~\cite{thomas1981relation}, asymptotic properties of the asynchronous state transition graph
have been connected to the existence and the signs of regulatory circuits.

Shih and Dong~\cite{shih2005combinatorial} established that, if no local regulatory circuit exists,
then the map admits a unique fixed point.
The result was extended to the multilevel setting by Richard~\cite{richard2008extension}.

The presence of multiple attractors was shown to require the existence of a
local positive circuit~\cite{richard2007necessary}.
The existence of a cyclic attractor requires instead the (global) regulatory graph to
include a negative circuit. This was proved in~\cite{remy2008graphic} for the case of an
attractive cycle (a cycle in the asynchronous dynamics that is an attractor),
and in the general case of a cyclic attractor in~\cite{richard2010negative}.

Cyclic attractors are compatible, however, with the absence of local negative circuits.
This was first shown in~\cite{richard2010negative} in the multilevel case.
Boolean networks with a cyclic attractor and no local negative circuits
were presented in~\cite{ruet2017negative}, with a method to create maps
with antipodal attractive cycles and no local negative circuits, for $n\geq 7$.
Tonello~\cite{tonello2017ontheconversion} and Faur{\'e} and Kaji~\cite{faure2018circuit}
exhibited maps with cyclic attractors and no local negative circuits, for $n=6$.
Maps with antipodal attractive cycles and no local negative circuits
also exist for $n=6$; a procedure that extends the one
in~\cite{ruet2017negative} is presented, for completeness, in~\cref{appendix:antip6}.

In this work we consider~\cref{q1} in the remaining cases ($n\leq 5$).
We show that the problem can be approached as a Boolean satisfiability problem,
and find that all maps from $\B^n$ to itself with a cyclic attractor define a local negative circuit.

\subsection{Automorphisms of the $n$-hypercube}
In this section we present some relationships between Boolean networks and symmetries of the hypercube;
these will be used to translate~\cref{q1} into a Boolean expression
(see~\cref{sec:cyclic_attr}).

We first introduce some additional notations. Given $I \subseteq \{1,\dots,n\}$,
$\psi_I$ denotes the map defined by $\psi_I(x) = \bar{x}^I$ for all $x \in \B^n$.
We call $S_n$ the group of permutations of $\{1,\dots,n\}$;
$S_n$ acts on $\B^n$ by permuting the coordinates: for $\sigma \in S_n$, $\sigma(x)=(x_{\sigma^{-1}(1)},\dots,x_{\sigma^{-1}(n)})$.
We consider here the maps of the form $U = \psi_I \circ \sigma$
for some $I \subseteq \{1,\dots,n\}$ and some $\sigma \in S_n$.
These are all the automorphisms of the $n$-hypercube (see for instance~\cite{slepian1953number,ruet2017negative}).

Given the maps $U =\psi_I \circ \sigma$ and $f:\B^n \rightarrow \B^n$,
we write $f^U = U \circ f \circ U^{-1}$.
The following proposition relates the asynchronous state transition graphs and regulatory graphs of $f$ and $f^U$,
asserting that they have the same structures.
In addition, albeit the signs
of the interactions of the regulatory graphs can differ, the signs of the regulatory circuits are the same.
An example illustrating this property is given in~\cref{fig:sym_2}.

\begin{proposition}\label{prop:change_circuits}
  Consider the maps $U = \psi_I \circ \sigma$ and $f: \B^n \rightarrow \B^n$. 

  \begin{itemize}
    \item[(i)] The state transition graphs $AD_f$ and $AD_{f^U}$ are isomorphic.
    \item[(ii)] For each $x \in \B^n$, the graphs $G_f(x)$ and $G_{f^U}(U(x))$,
      seen as unlabelled directed graphs, are isomorphic. In addition, corresponding circuits have the same signs.
  \end{itemize}

\end{proposition}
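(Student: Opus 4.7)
The proof proposal is to unpack what the automorphism $U = \psi_I \circ \sigma$ does to individual bits, and then track how this affects the conditions defining edges in $AD_f$, interactions in $G_f(x)$, and their signs.

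My first step is to establish two basic identities about $U$ which underlie both parts of the claim. Writing $U(x)_j = \sigma(x)_j \oplus [j \in I]$, a direct computation shows that $U(\bar{x}^i) = \overline{U(x)}^{\sigma(i)}$, so $U$ sends the hypercube edge between $x$ and $\bar{x}^i$ to the hypercube edge between $U(x)$ and $\overline{U(x)}^{\sigma(i)}$. Also, if I define signs $\epsilon_i\in\{+1,-1\}$ by $\epsilon_i=1$ when $\sigma(i)\notin I$ and $\epsilon_i=-1$ when $\sigma(i)\in I$, then for any $y\in\B^n$ the coordinate satisfies
\[ U(y)_{\sigma(i)} - U(x)_{\sigma(i)} = \epsilon_i\bigl(y_i - x_i\bigr). \]

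For part (i), I would use the relation $f^U(U(x)) = U(f(x))$ (immediate from the definition $f^U = U\circ f\circ U^{-1}$) together with the second identity applied to $y = f(x)$ to obtain
\[ f^U_{\sigma(i)}(U(x)) - U(x)_{\sigma(i)} = \epsilon_i\bigl(f_i(x) - x_i\bigr). \]
Since $\epsilon_i \neq 0$, this gives $f^U_{\sigma(i)}(U(x)) \neq U(x)_{\sigma(i)}$ if and only if $f_i(x)\neq x_i$. Combined with $U(\bar{x}^i)=\overline{U(x)}^{\sigma(i)}$, this shows $U$ is an isomorphism from $AD_f$ onto $AD_{f^U}$.

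For part (ii), the node-set bijection is again $\sigma$. An edge $j\to i$ in $G_f(x)$ is defined by $f_i(\bar{x}^j)\neq f_i(x)$; applying the same argument as above with $x$ replaced by $\bar{x}^j$ shows $f^U_{\sigma(i)}(\overline{U(x)}^{\sigma(j)}) \neq f^U_{\sigma(i)}(U(x))$ iff $f_i(\bar{x}^j)\neq f_i(x)$, so the unlabelled graphs $G_f(x)$ and $G_{f^U}(U(x))$ are isomorphic via $\sigma$. For the signs, both factors in the sign formula pick up a $\pm 1$: using the second identity,
\[ \overline{U(x)}^{\sigma(j)}_{\sigma(j)} - U(x)_{\sigma(j)} = \epsilon_j(\bar{x}^j_j - x_j) \quad\text{and}\quad f^U_{\sigma(i)}(\overline{U(x)}^{\sigma(j)}) - f^U_{\sigma(i)}(U(x)) = \epsilon_i(f_i(\bar{x}^j)-f_i(x)), \]
so the sign of $\sigma(j)\to\sigma(i)$ in $G_{f^U}(U(x))$ equals $\epsilon_i\epsilon_j$ times the sign of $j\to i$ in $G_f(x)$.

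The final step, which is the only slightly non-routine point, is to observe that along any circuit $i_1\to i_2\to\cdots\to i_k=i_1$ each vertex occurs exactly once as a source and once as a target, so in the product of per-edge sign corrections every $\epsilon_{i_\ell}$ appears twice and contributes $\epsilon_{i_\ell}^2=1$. Hence the sign of the corresponding circuit in $G_{f^U}(U(x))$ equals the sign of the original circuit in $G_f(x)$. I do not expect serious obstacles; the whole argument is a bookkeeping exercise, and the main thing to be careful about is keeping the bit-flip signs $\epsilon_i$ properly tracked through both the source and target coordinates.
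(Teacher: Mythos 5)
Your proof is correct. Part (i) and the unlabelled-graph part of (ii) follow essentially the same route as the paper: you establish that $U$ intertwines coordinate flips ($U(\bar{x}^j)=\overline{U(x)}^{\sigma(j)}$) and that $f^U(U(x))=U(f(x))$, then read off the edge correspondences; this matches the paper's computation via its equation for $f^U_{\sigma(i)}(U(x))$. Where you genuinely diverge is the circuit-sign claim in (ii). The paper does not track individual edge signs at all: it invokes \cref{lemma:sign_circuit_parity}, noting that the sets $\{i \in L \mid f_i(x)\neq x_i\}$ and $\{i\in\sigma(L)\mid f^U_i(U(x))\neq U(x)_i\}$ have the same cardinality by part (i), so the parity criterion gives equal circuit signs directly. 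You instead compute that each interaction sign transforms by the factor $\epsilon_i\epsilon_j$ and then observe that around a circuit every vertex occurs once as source and once as target, so the corrections telescope to $\prod_\ell \epsilon_{i_\ell}^2=1$. Both arguments are valid. Yours is self-contained (it does not rely on the parity lemma) and yields strictly more information, namely the exact relation between corresponding edge labels, which also explains the phenomenon illustrated in \cref{fig:sym_2} where individual interactions change sign but circuits do not; the paper's argument is shorter given that \cref{lemma:sign_circuit_parity} has already been stated, and reuses part (i) more directly.
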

\begin{proof}
  $(i)$ We have that $(x, \bar{x}^i)$ is in $AD_f$ if and only if $(U(x), U(\bar{x}^i)=\overline{U(x)}^{\sigma(i)})$
  is in $AD_{f^U}$, so that the graph isomorphism is given by $U$.
  This follows from the observation that
  \begin{equation}\label{eq:fUx}
      f^U_{\sigma(i)}(U(x))=\overline{\sigma(f(x))}^I_{\sigma(i)}=\overline{f(x)}_i^{\sigma^{-1}(I)},
  \end{equation}
  and $U(x)_{\sigma(i)}=\overline{\sigma(x)}^I_{\sigma(i)}=\overline{x}_i^{\sigma^{-1}(I)}$, and therefore
  $f^U_{\sigma(i)}(U(x))\neq U(x)_{\sigma(i)}$ if and only if
  $f_i(x)\neq x_i$.

  $(ii)$ The graph $G_{f^U}(U(x))$ contains an interaction $\sigma(j) \to \sigma(i)$
  if and only if $f^U$ verifies $f^U_{\sigma(i)}(\overline{U(x)}^{\sigma(j)}) \neq f^U_{\sigma(i)}(U(x))$.
  Since $\overline{U(x)}^{\sigma(j)}=U(\bar{x}^j)$, as a consequence of~\cref{eq:fUx} we have that 
  $f^U_{\sigma(i)}(\overline{U(x)}^{\sigma(j)}) = \overline{f(\bar{x}^j)}_i^{\sigma^{-1}(I)}$,
  hence the graph $G_{f^U}(U(x))$ contains the interaction $\sigma(j) \rightarrow \sigma(i)$
  if and only if $f_i(\bar{x}^j) \neq f_i(x)$, i.e.\ if and only if
  $j \rightarrow i$ is an interaction in $G_{f}(x)$.

  Given a circuit $C$ in $G_f(x)$ with support on $L \subseteq \{1,\dots,n\}$,
  $\sigma(L)$ is therefore the support of a circuit $C^U$ in $G_{f^U}(U(x))$.
  In addition, from point (i), we have that the sets $\{i \in L| f_i(x)\neq x_i\}$ and
  $\{i \in \sigma(L)| f^U_i(U(x))\neq U(x)_i\}$ have the same cardinality.
  We conclude by observing that, by~\cref{lemma:sign_circuit_parity}, the circuit $C$ is positive (resp. negative)
  if and only if the cardinality of $\{i \in L| f_i(x)\neq x_i\}$ is even (resp. odd),
  hence if and only if the cardinality of $\{i \in \sigma(L)| f^U_i(U(x))\neq U(x)_i\}$ is
  even (resp. odd), if and only if $C^U$ is positive (resp. negative).
\end{proof}

It follows from the proposition that a property relating the asymptotic behaviour of the asynchronous dynamics and
the regulatory circuits holds for a map if and only if it holds for any of its conjugated maps under symmetry.
We will use this fact when writing~\cref{q1} as a Boolean satisfiability problem in the next section.

\begin{figure}[t]
  \centering
  \begin{subfigure}[b]{7cm}
  \begin{tikzcd}[ampersand replacement=\&,column sep=scriptsize]
    01 \arrow[r] \arrow[d] \& 11 \arrow[d, rightharpoonup, xshift=0.5pt] \\
    00 \& 10 \arrow[l] \arrow[u, rightharpoonup, xshift=-0.5pt]
  \end{tikzcd}
  \quad
  \begin{tikzcd}[ampersand replacement=\&,column sep=scriptsize]
    \circled{1} \arrow[r, bend left=30] \& \circled{2} \arrow[l, bend left=30] \arrow[loop right,-|]
  \end{tikzcd}
  \subcaption{\label{fig:sym_2_b}}
  \end{subfigure}
  \begin{subfigure}[b]{7cm}
  \begin{tikzcd}[ampersand replacement=\&,column sep=scriptsize]
    01 \& 11 \arrow[d] \arrow[l] \\
    00 \arrow[r, rightharpoonup, yshift=0.5pt] \arrow[u] \& 10 \arrow[l, rightharpoonup, yshift=-0.5pt]
  \end{tikzcd}
  \quad
  \begin{tikzcd}[ampersand replacement=\&,column sep=scriptsize]
    \circled{1} \arrow[loop left,-|] \arrow[r, bend left=30,-|] \& \circled{2} \arrow[l, bend left=30,-|]
  \end{tikzcd}
  \subcaption{\label{fig:sym_2_a}}
  \end{subfigure}
  \caption{The graphs in (a) and (b) represent the asynchronous state transition graphs and the regulatory graphs of the
    maps $f: (x_1,x_2) \mapsto (x_2,x_1(1-x_2))$ and $g: (x_1,x_2) \mapsto ((1-x_1)(1-x_2),1-x_1)$ respectively.
    Standard arrows $j \rightarrow i$ denote interactions with positive sign, and arrows with a vertical tip $j \dashv i$ represent negative interactions.
    The asynchronous state transition graphs have the same ``shape'': the map in $(b)$ can be obtained from the map in $(a)$
    by swapping the two components, and changing $0$ with $1$ for the second component.
    In other words, $g = U \circ f \circ U^{-1}$, with $U=\psi_I\circ\sigma$, $\psi_I\colon(x_1,x_2)\mapsto(x_1,1-x_2)$ and $\sigma\colon(x_1,x_2)\mapsto(x_2,x_1)$.
    %$U: (x_1,x_2) \mapsto(x_2,1-x_1)$, % = \psi_I \circ \sigma $U^{-1}: (x_1,x_2) \mapsto(1-x_2,x_1)$. % = \sigma \circ \psi_I %$I=\{2\}$, $\sigma = (12)$
    The regulatory graphs of the two maps also have the same edges. The positive interactions on the left correspond to negative interactions 
    on the right; however, the sign of the loop is negative in both regulatory graphs, and the sign of the circuit
    involving the two components is positive in both graphs.}\label{fig:sym_2}
\end{figure}
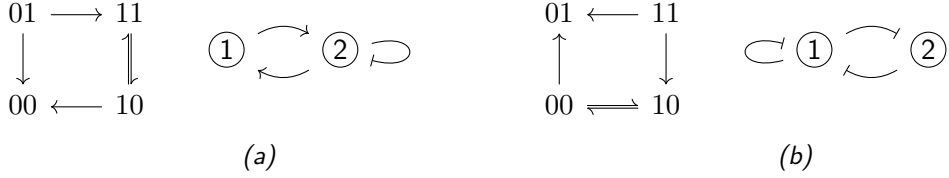
%Details:
%\[
%(x_1,x_2)\overset{\psi_{I}}{\mapsto} (x_1,1-x_2) \overset{\sigma}{\mapsto} (1-x_2,x_1)
%\overset{f}{\mapsto} (x_1,(1-x_2)(1-x_1))
%\overset{\sigma}{\mapsto} ((1-x_2)(1-x_1),x_1)
%\overset{\psi_I}{\mapsto} ((1-x_2)(1-x_1),1-x_1)
%\]

\section{Recasting \cref{q1} as a Boolean satisfiability problem}\label{sec:bsp}
For each $n$, \cref{q1} requires that we determine (or exclude the existence of) a map $f$ from $\B^n = {\{0,1\}}^n$ to itself.
We therefore consider as variables of the problem $n \cdot 2^n$ Boolean
variables that we denote as
\begin{equation}\label{var:bool}
  f_1(x), \dots, f_n(x), \ x \in \B^n.
\end{equation}
We first describe how the absence of negative circuits in the
local regulatory graph $G_f(x)$ can be translated into a set of expressions on
the variables~\cref{var:bool}.

\subsection{Imposing the absence of local negative circuits}\label{sec:no_lnc}

To express the sign condition on the circuits, we consider each local graph as a
complete graph on the nodes $\{1, \dots, n\}$. Then, we consider every possible
circuit on this graph, and we impose that each circuit has a non-negative sign.
For small values of $n$, this requirement leads to a satisfiability problem
that is computationally manageable.
The number of elementary circuits of length $k$ in a complete graph on $n$ nodes
is given by ${n \choose k} (k-1)!$. Hence we have to consider, for instance,
$89$ circuits for $n=5$, and $415$ circuits for $n=6$.
Let $\C_n$ denote the set of all possible circuits on the complete graph on $\{1,\dots,n\}$.

Given a state $x \in \B^n$, if an interaction exists in $G_f(x)$ from $j$ to $i$, then its sign is given by the difference
$f_i(x_1, \dots, x_{j-1}, 1, x_{j+1}, \dots, x_n)-f_i(x_1, \dots, x_{j-1}, 0, x_{j+1}, \dots, x_n)$.
We define
\[l_x^0(j,i) = f_i(x_1, \dots, x_{j-1}, 0, x_{j+1}, \dots, x_n), \hspace{10pt} l_x^1(j,i) = f_i(x_1, \dots, x_{j-1}, 1, x_{j+1}, \dots, x_n).\]
The following Boolean expression asserts that the interaction from $j$ to $i$ is positive:
\[\mathcal{P}^x(j,i) = l^1_x(j,i) \wedge \neg l^0_x(j,i),\]
and the following Boolean expression asserts that the interaction is negative:
\[\mathcal{N}^x(j,i) = \neg l^1_x(j,i) \wedge l_x^0(j,i).\]

We can now write a formula expressing that, given a state $x$, a circuit $c$ is negative in $G_f(x)$,
that is to say, the circuit $c$ contains an odd number of negative interactions, the remaining interactions being positive.
We write $m$ for the length of the circuit, and $c^-$ and $c^+$ for the interactions in $c$ with negative or positive sign, respectively.
We obtain the following formula:
\begin{equation}\label{eq:neg_local}
\Phi^x_c = \bigvee_{\substack{1 \leq k \leq m,\ k\ \text{odd} , \\ c = c^- \cup c^+,\ \#c^- = k}} \left(\bigwedge_{j \rightarrow i \ \text{in}\  c^-} \mathcal{N}^x(j,i) \wedge \bigwedge_{j \rightarrow i \ \text{in}\  c^+} \mathcal{P}^x(j,i) \right).
\end{equation}
The absence of local negative circuits in the regulatory graph is therefore specified by the formula
\begin{equation}\label{eq:no_neg_local}
  \begin{aligned}
\bigwedge_{x \in \B^n, c \in \C_n} \neg \Phi^x_c
& = \bigwedge_{x \in \B^n, c \in \C_n} \neg \left(\bigvee_{\substack{1 \leq k \leq m,\ k\ \text{odd} , \\ c = c^- \cup c^+,\ \#c^- = k}}
  \left(\bigwedge_{j \rightarrow i \ \text{in}\  c^-} \mathcal{N}^x(j,i) \wedge \bigwedge_{j \rightarrow i \ \text{in}\  c^+} \mathcal{P}^x(j,i)\right)\right), \\
  \end{aligned}
\end{equation}
which we can write in CNF form as
\begin{equation*}
  \begin{aligned}
\bigwedge_{\substack{x \in \B^n\\c \in \C_n}} \neg \Phi^x_c
& = \bigwedge_{\substack{x \in \B^n\\c \in \C_n}} \bigwedge_{\substack{1 \leq k \leq m,\ k\ \text{odd} , \\ c = c^- \cup c^+,\ \#c^- = k}}
  \left(\bigvee_{j \rightarrow i \ \text{in}\  c^-} l^1_x(j,i) \vee \neg l_x^0(j,i) \vee \bigvee_{j \rightarrow i \ \text{in}\  c^+} \neg l^1_x(j,i) \vee l^0_x(j,i)\right).
  \end{aligned}
\end{equation*}

\subsection{A simpler question: absence of fixed points}
Before considering~\cref{q1} in its generality, we describe how a special case of the question
can be easily translated into a Boolean satisfiability problem.
The question is the following:
\begin{question}\label{q2}
  Does the absence of fixed points imply the existence of a local
  negative circuit in the regulatory graph?
\end{question}
The absence of local negative circuits being formulated as in~\cref{sec:no_lnc},
we now need to formulate the absence of fixed points.
To express that a state $x \in \B^n$ is not a fixed point for $f$ we can write
the following formula:
\begin{equation}\label{eq:fp}
  \F^x = \bigvee_{\substack{1\leq i \leq n\\x_i=0}} f_i(x) \lor \bigvee_{\substack{1\leq i \leq n\\x_i=1}} \neg f_i(x).
\end{equation}
The formula expressing the absence of fixed points for $f$ can be written as:
\begin{equation}\label{eq:no_fixed_points}
\bigwedge_{x \in \B^n} \F^x = \bigwedge_{x \in \B^n}\left(\bigvee_{\substack{1\leq i \leq n\\x_i=0}} f_i(x) \lor \bigvee_{\substack{1\leq i \leq n\\x_i=1}} \neg f_i(x)\right).
\end{equation}
Since the state $\0$ is not fixed, there exists an index $i$ such that $f_i(\0)=1$.
Consider a permutation $\sigma \in S_n$ that sends $i$ to $1$. The map $g = \sigma \circ f \circ \sigma^{-1}$
satisfies $g_1(\0)=1$; in addition, by \cref{prop:change_circuits}, $g$ and $f$ have local circuits
with the same signs. We can therefore assume that the first coordinate of $f(\0)$ is $1$.
The formula corresponding to \cref{q2} is therefore:
\begin{equation}\label{eq:q2}
\Q_2 = \left(\bigwedge_{x \in \B^n} \F^x\right) \wedge
       \left(\bigwedge_{x \in \B^n, c \in \C_n} \neg \Phi^x_c\right) \wedge f_1(\0).
\end{equation}
The unsatisfiability of this problem is thus determined, for $n=5$, in minutes, by the satisfiability solvers we considered (see \cref{sec:results}).
The solvers also identify other examples of maps with no fixed points and no local negative circuits in the regulatory graph, for $n=6$.
The existence of a cyclic attractor is less straightforward to express; we describe our approach in the next section.

\subsection{A necessary condition for the existence of a cyclic attractor}\label{sec:cyclic_attr}
In this section we consider~\cref{q1} in its generality.
We need therefore to assert that the asynchronous state transition graph of $f$ admits
a cyclic attractor.
The approach is based on the following observation.

\begin{proposition}\label{prop:paths}
  The asynchronous state transition graph $AD_f$ of a map $f: \B^n \rightarrow \B^n$ admits a
  cyclic attractor if and only if there exists a state $x \in \B^n$
  such that, for any $y \in \B^n$, if there is a path in $AD_f$ from $x$ to $y$, then $y$ is not a fixed point.
\end{proposition}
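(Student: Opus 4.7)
The plan is to prove both implications by working directly with the definition of trap domain and the minimality property that characterises attractors.

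For the forward direction, I would start by assuming that $AD_f$ admits a cyclic attractor $A$, and pick any $x \in A$. Since $A$ is a trap domain, every $y$ reachable from $x$ in $AD_f$ lies in $A$. I then want to argue that no state of $A$ is a fixed point of $f$: if some $y \in A$ satisfied $f(y)=y$, then $\{y\}$ would itself be a trap domain (it has no outgoing asynchronous transitions), and the strict inclusion $\{y\}\subsetneq A$ (strict because $A$ is cyclic, hence has more than one state) would contradict the minimality of $A$. Combining these two facts yields the required $x$.

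For the backward direction, I would assume the existence of an $x$ as in the statement, and consider the set
\[
D = \{\, y \in \B^n : \text{there is a path from } x \text{ to } y \text{ in } AD_f \,\},
\]
which is non-empty (it contains $x$) and, by construction of reachability, is a trap domain of $AD_f$. Since $\B^n$ is finite, $D$ contains a minimal trap domain $A'$, namely an attractor. By hypothesis $D$ contains no fixed point of $f$, hence neither does $A'$. A single-state attractor $\{y\}$ would force $y$ to be a fixed point, so $A'$ must consist of at least two states and is therefore a cyclic attractor.

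I do not anticipate any real obstacle here; the argument is essentially a bookkeeping exercise around the definitions of trap domain and attractor. The only subtle point is the forward direction, where one must remember to use both properties of a cyclic attractor—being a trap domain (for the reachability part) and being a \emph{minimal} trap domain with more than one state (to rule out internal fixed points). In the backward direction, the key non-trivial ingredient is the (standard and easy) fact that every non-empty trap domain in a finite state space contains at least one attractor, which follows by taking any minimal non-empty trap sub-domain.
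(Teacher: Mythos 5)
Your proposal is correct and follows essentially the same route as the paper's proof: pick a state in the cyclic attractor for the forward direction, and for the converse take the reachability set of $x$, extract a minimal trap domain inside it, and note it contains no fixed point. Your write-up merely makes explicit two details the paper leaves implicit (that a fixed point inside a cyclic attractor would violate minimality, and that a singleton attractor must be a fixed point), which is fine.
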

\begin{proof}
  If $AD_f$ admits a cyclic attractor, then the conclusion is true
  for any state $x$ in the cyclic attractor.

  Conversely, suppose that $x$ is a state with the described property, and
  call $R$ the set of points reachable from $x$ in the asynchronous state transition graph.
  Then the minimal trap domain contained in $R$
  does not contain any fixed point, hence it must contain a cyclic attractor for $AD_f$.
\end{proof}

\cref{prop:paths} translates the existence of a cyclic attractor
into a condition on the paths in the asynchronous state transition graph.
It is, however, computationally problematic to impose that, if $AD_f$ contains a
path of \emph{any length} from $x$ to $y$, then $y$ is not a fixed point.
We therefore consider the following condition instead.

%\setthm{$\Gamma(k)$}
\begin{condition}\label{cond:gammak}
  There exists a state $x \in \B^n$ such that, for each $y \in \B^n$,
  if there is an acyclic path in $AD_f$ from $x$ to $y$ of length at most $k$,
  then $y$ is not a fixed point.
\end{condition}

It is clear from \cref{prop:paths} that, for each $k\geq 0$,
\cref{cond:gammak} is a necessary condition for the existence of a cyclic attractor.
Our strategy is therefore to impose the absence of local negative circuits,
as well as \cref{cond:gammak} for increasing values of $k$, until
we find that the problem is unsatisfiable.

In order to express \cref{cond:gammak}, we need to encode
the existence of a given path in the asynchronous state transition graph.
Given a pair of states $(x, y)$ such that $d(x, y)=1$, if $x_j\neq y_j$ we can require that
the edge $(x, y)$ is in $AD_f$ by imposing
\begin{equation}\label{eq:step}
  f_j(x) \text{ if } y_j = 1\text{, else } \neg f_j(x).
\end{equation}
Given a sequence of states $\pi = (x^0, x^1, \dots, x^k)$ such that $d(x^i, x^{i+1})=1$,
$i=0, \dots, k-1$, we can require that the sequence defines a path in $AD_f$
by imposing $k$ constraints of the form in~\cref{eq:step}:
\begin{equation}\label{eq:path}
  \Theta^\pi = \bigwedge_{\substack{0\leq i\leq k-1\\j \text{ s.t. } x^i_j \neq x^{i+1}_j\\x^{i+1}_j=0}}\neg f_j(x^i) \wedge \bigwedge_{\substack{0\leq i\leq k-1\\j \text{ s.t. } x^i_j \neq x^{i+1}_j\\x^{i+1}_j=1}}f_j(x^i).
\end{equation}

Given a state $x \in \B^n$, let $P^k(x)$ denote the set of acyclic paths in the
$n$-dimensional hypercube graph that start from $x$ and have length less or equal to $k$.
If $\pi$ is a path in $AD_f$, we write $t(\pi)$ for the last node of the path.
We express \cref{cond:gammak} for a state $x \in \B^n$, using~\cref{eq:fp}, as follows:
\begin{equation}\label{eq:condgammakx}
  \bigwedge_{\pi \in P^k(x)} \left(\Theta^\pi \Rightarrow \F^{t(\pi)}\right) =
  \bigwedge_{\pi \in P^k(x)} \neg \Theta^\pi \vee \F^{t(\pi)}.
\end{equation}

\Cref{cond:gammak} requires the existence of a state $x \in \B^n$ that verifies~\cref{eq:condgammakx}.
Suppose that a map $f$ satisfies \cref{eq:condgammakx} for some $x \in \B^n$,
and that its local regulatory graphs do not admit any negative circuit.
Consider $j$ such that $f_j(x)\neq x_j$, and consider a permutation $\sigma \in S_n$ that swaps $j$ and $1$.
Define $I = \{i \in \{1,\dots,n\}| \sigma(x)_i\neq 0\}$.
Then, by \cref{prop:change_circuits}, the map $f^U$ with $U=\psi_I \circ \sigma$ satisfies \cref{eq:condgammakx} for $x=\0$,
and its local regulatory graphs do not admit any negative circuit. In addition, $f_1(\0)=1$.
We have therefore that, to exclude the existence of maps with cyclic attractors and no local negative circuits,
it is sufficient to consider expression~\cref{eq:condgammakx} for $x=\0$, and assume $f_1(\0)=1$.
By combining~\cref{eq:condgammakx} with~\cref{eq:no_neg_local}, we have, for fixed $k$,
the Boolean formula
\begin{equation}\label{eq:q1}
\Q_1 = \left(\bigwedge_{\pi \in P^k(\0)} \neg \Theta^\pi \vee \F^{t(\pi)}\right) \wedge
     \left(\bigwedge_{x \in \B^n, c \in \C_n} \neg \Phi^x_c\right) \wedge f_1(\0),
\end{equation}
which we can use to answer \cref{q1}.
Notice that $\Q_1$ is a generalisation of~\cref{eq:q2}, where fewer points are
required to be non-fixed.
Using \cref{eq:condgammakx} and \cref{eq:path},~\cref{eq:q1} is easily written in CNF form.

\subsection{Results}\label{sec:results}
%\begin{table}[t]
%  \centering
%  \begin{tabular}{c|cccc}
%    n & 2 & 3 & 4 & 5 \\
%    \hline
%    k & 2 & 4 & 6 & 11
%  \end{tabular}
%  \caption{Minimum path length $k$ such that, in a Boolean model with $n$ variables, \cref{eq:q1} is unsatisfiable,
%           i.e.\ if all paths from state $\0$ of length at most $k$ do not reach a fixed point, there must exist a local
%           negative circuit.}\label{table:results}
%\end{table}
\begin{table}[t]
  \centering
  \begin{tabular}{c|cccc}
    \multirow{2}{*}{n} & absence of & absence of local & \multirow{2}{*}{$k$} & \multirow{2}{*}{\cref{cond:gammak}} \\
                       & fixed points & negative circuits & & \\
    \hline
    2 & 4 & 16 & 2 & 4 \\
    3 & 8 & 136 & 4 & 39 \\
    4 & 16 & 1,536 & 6 & 1,036 \\
    5 & 32 & 23,328 & 11 & 2,595,405
  \end{tabular}
  \caption{Number of clauses generated by the constraints used to answer \cref{q2} and \cref{q1}.
    $k$ is the path length considered for \cref{cond:gammak},
    and is the minimum path length such that, in a Boolean model with $n$ variables, \cref{eq:q1} is unsatisfiable,
    i.e.\ if all paths from state $\0$ of length at most $k$ do not reach a fixed point,
    there must exist a local negative circuit.}\label{table:results}
\end{table}
We created CNF files in DIMACS CNF format, a standard input format accepted by most SAT solvers.
The files start with a line that begins with \texttt{p cnf} followed by the number
of variables and the number of clauses. One line for each clause then follows.
Each clause is expressed by listing the indices of the variables involved in the clause separated by spaces,
using negative numbers for negated variables. A zero is added at the end of each clause line.
The files were created with a Python script (source code available at \href{https://github.com/etonello/regulatory-network-sat}{github.com/etonello/regulatory-network-sat}).

Using the satisfiability solver Lingeling~\cite{biere2016splatz},
we found that, if $k$ is set to $2,4,6,11$ respectively, for $n=2,3,4,5$,
the problem described by~\cref{eq:q1} is unsatisfiable.
This means that, for $n\leq 5$, all maps
that admit a cyclic attractor must have a local negative circuit.
%The results are summarised in \cref{table:results}.

The lengths $k=2,4,6,11$ are the minimum lengths that lead to the unsatisfiability of the formula in~\cref{eq:q1}.
In other words, there exists at least one map in dimension $2$ (respectively $3$, $4$ and $5$) such that
the paths of length at most $1$ (respectively $3$, $5$ and $10$) do not reach a fixed point,
and the associated regulatory graph does not admit a local negative circuit.
Examples of such maps are given in \cref{fig:counterex_smaller_k}, for $n=2$ and $n=3$.
\Cref{fig:idea} illustrates instead the idea of the result obtained for $n=2$ and $n=3$,
for two special cases of asynchronous state transition graphs admitting a unique path leaving the origin:
since this path reaches $3$ (respectively $5$) different states, the regulatory graph must admit a local negative circuit,
somewhere in the state space.

The CNF file for $n=5$ and $k=11$ on the $160$ variables consists of about 2.6 million clauses
(the number of clauses for each constraint is given in \cref{table:results}).
The satisfiability solver Lingeling~\cite{biere2016splatz}
was used to determine the unsatisfiability and to generate a proof, expressed in the standard DRAT notation~\cite{wetzler2014drat}.
For $n=5$ and $k=11$, the file for the proof is about $1$GB in size,
and was verified using the SAT checking tool chain GRAT~\cite{lammich2017thegrat}.
The CNF file and the proof of unsatisfiability generated for $n=5$, $k=11$ are available as Supplementary Materials.

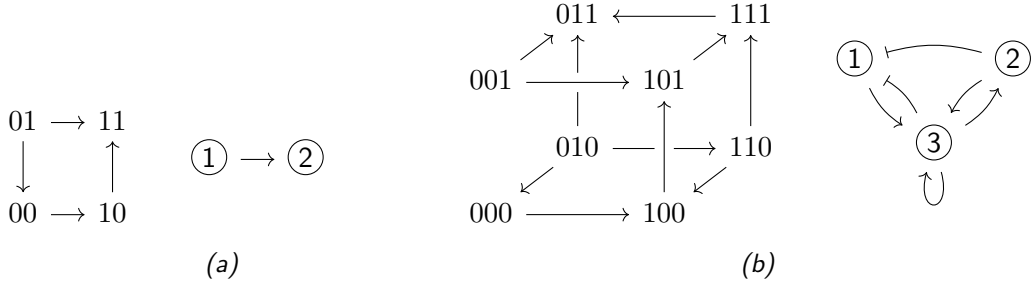
\begin{figure}[t]
  \centering
  \begin{subfigure}[b]{6cm}
  \begin{tikzcd}[ampersand replacement=\&,column sep=small]
    01 \arrow[r] \arrow[d] \& 11 \\
    00 \arrow[r] \& 10 \arrow[u]
  \end{tikzcd}
  \quad
  \begin{tikzcd}[ampersand replacement=\&,column sep=small]
    \circled{1} \arrow[r] \& \circled{2}
  \end{tikzcd}
  \subcaption{\label{fig:counterex_smaller_k_a}}
  \end{subfigure}
  \begin{subfigure}[b]{8cm}
  \begin{tikzcd}[ampersand replacement=\&,column sep=tiny,row sep=small]
    \& 011 \arrow[from=dd] \& \& 111 \arrow[ll] \\
    001 \arrow[ur] \arrow[rr,crossing over] \& \& 101 \arrow[ru] \& \\
    \& 010 \arrow[rr] \arrow[ld] \& \& 110 \arrow[uu] \arrow[ld] \\
    000 \arrow[rr] \& \& 100 \arrow[uu,crossing over] \&
  \end{tikzcd}
  \quad
  \begin{tikzcd}[ampersand replacement=\&,column sep=tiny,row sep=small]
    \circled{1} \arrow[rd,bend right=15] \& \& \circled{2} \arrow[ld,bend right=15] \arrow[ll,-|,bend right=15] \\
    \& \circled{3} \arrow[loop below] \arrow[ru,bend right=15] \arrow[lu,-|,bend right=15] \& \\
  \end{tikzcd}
  \subcaption{\label{fig:counterex_smaller_k_b}}
  \end{subfigure}
  \caption{Example showing that \cref{cond:gammak} is compatible with the absence of local negative circuits for $n=2$ with $k=1$, and for $n=3$ with $k=3$.
    (a) The asynchronous state transition graph and the regulatory graph for the map $f(x_1,x_2)=(1,x_1)$.
    The path of length $2$ leaving the origin reaches a fixed point, and the regulatory graph does not admit any local circuit.
    (b) The asynchronous state transition graph and the (global) regulatory graph for the map $f(x_1,x_2,x_3)=(1-x_2x_3,x_3,x_1x_2x_3-x_1x_2-x_1x_3-x_2x_3+x_1+x_2+x_3)$.
    The path of length $4$ leaving the origin reaches a fixed point; none of the negative circuits admitted by regulatory graph are local.}\label{fig:counterex_smaller_k}
\end{figure}
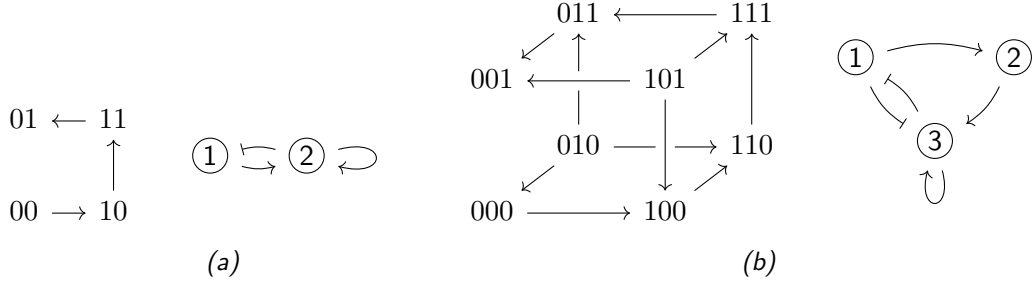
\begin{figure}[t]
  \centering
  \begin{subfigure}[b]{6cm}
  \begin{tikzcd}[ampersand replacement=\&,column sep=small]
    01 \& 11 \arrow[l] \\
    00 \arrow[r] \& 10 \arrow[u]
  \end{tikzcd}
  \quad
  \begin{tikzcd}[ampersand replacement=\&,column sep=small]
    \circled{1} \arrow[r,bend right=15] \& \circled{2} \arrow[l,-|,bend right=15] \arrow[loop right]
  \end{tikzcd}
  \subcaption{\label{fig:idea_a}}
  \end{subfigure}
  \begin{subfigure}[b]{8cm}
  \begin{tikzcd}[ampersand replacement=\&,column sep=tiny,row sep=small]
    \& 011 \arrow[ld] \arrow[from=dd] \arrow[from=dd,to=ddrr] \& \& 111 \arrow[ll] \\
    001 \& \& 101 \arrow[ll,crossing over] \arrow[dd,crossing over] \arrow[ur] \& \\
    \& 010 \arrow[ld] \& \& 110 \arrow[uu] \\
    000 \arrow[rr] \& \& 100 \arrow[ur] \&
  \end{tikzcd}
  \quad
  \begin{tikzcd}[ampersand replacement=\&,column sep=tiny,row sep=small]
    \circled{1} \arrow[rr,bend left=15] \arrow[rd,-|,bend right=15] \& \& \circled{2} \arrow[ld,bend left=15] \\
    \& \circled{3} \arrow[loop below] \arrow[lu,-|,bend right=15] \& \\
  \end{tikzcd}
  \subcaption{\label{fig:idea_b}}
  \end{subfigure}
  \caption{(a) The asynchronous state transition graph and the regulatory graph for the map $f(x_1,x_2)=(1-x_2,x_1+x_2-x_1x_2)$.
    The paths leaving the origin do not reach a fixed point in $2$ steps, hence a local negative circuit must exist in the regulatory graph.
    The unique attractor for the asynchronous state transition graph is a fixed point.
    (b) The asynchronous state transition graph and the (global) regulatory graph for the map $f(x_1,x_2,x_3)=(1-x_3,x_1,x_1x_2x_3-x_1x_3-x_2x_3+x_2+x_3)$.
    No local negative circuit of dimension $1$ or $2$ exists; however, since the only path leaving the origin has length $5$,
    the regulatory graph must admit a local negative circuit involving all three variables.
    The unique attractor for the asynchronous state transition graph is a fixed point.}\label{fig:idea}
\end{figure}

\section{Conclusion}\label{sec:conclusion}
In this work we have considered the question of whether
a regulatory network whose asynchronous state transition graph
contains a cyclic attractor must admit a local negative circuit.
For $n\geq 6$, only the existence of a negative circuit in the global regulatory structure is guaranteed~\cite{richard2010negative}.
We have written the question as a Boolean satisfiability problem,
and SAT solvers found the problem unsatisfiable for $n \leq 5$.
Behaviours of gene regulatory networks have been previously investigated using
SAT (see, for instance~\cite{tamura2009detecting,dubrova2011sat,varela2018stable}).
Here we demonstrated that Boolean satisfiability problems can be utilised not only to examine the behaviour
of a given network, but also to explore the existence of maps with desired properties,
specifically, properties of the associated regulatory structure.

We actually verified that, in absence of local negative circuits,
\cref{cond:gammak}, that is implied by the existence of a cyclic attractor, cannot be satisfied, for $k$ sufficiently large.
\cref{cond:gammak} requires that, for at least one state in the state space,
paths of lengths at most $k$ leaving that state cannot reach a fixed point.
We found that \cref{cond:gammak} with $k=2,4,6,11$
is sufficient for the existence of a local negative circuit in the regulatory graph,
for dimensions $n=2,3,4,5$, respectively.
The absence of local negative circuits is instead compatible with \cref{cond:gammak}
for $k\leq 1,3,5$ and $10$, in dimensions $n=2,3,4,5$, respectively.

It is natural to ask whether a relation can be established between the values identified for $k$
via the satisfiability problems and specific properties of the $n$-hypercube.
Such an understanding could help in clarifying the change in behaviours between $n=5$ and $n=6$.
These points remain open for further research.

\section*{Acknowledgements}
E. Tonello thanks P. Capriotti for helpful discussions.

\bibliographystyle{plainnat}
\bibliography{biblio}

\appendix
\section{Boolean networks with antipodal attractive cycles}\label{appendix:antip6}
In the following, we write $e^j$ for the state such that $e^j_i = 0$ for $i \neq j$, and $e^j_j = 1$.
The following definition can be found in~\cite{ruet2016local,ruet2017negative}.
\begin{definition}\label{def:antipodal}
  A cycle is called \emph{antipodal attractive cycle} if it is obtained from the cycle
  \begin{equation}\label{eq:antipodal}
  (\0, e^1, e^1+e^2,\dots, e^1+\cdots+e^n, e^2+\cdots+e^n, \dots, e^n, \0)
  \end{equation}
  by application of a map $\psi_I \circ \sigma$, with $I\subseteq\{1,\dots,n\}$ and $\sigma \in S_n$.
\end{definition}
We describe here a procedure for constructing maps with an antipodal attractive cycle
and no local negative circuits for $n\geq 6$, thus extending
the method described in~\cite{ruet2017negative} to the case $n=6$.

The idea of the construction is the following.
The regulatory graph of the map consisting of the antipodal attractive cycle $\C$, and all other states
fixed, admits many local negative circuits. These circuits belong to graphs
$G_f(x)$ with $x \in \C$, since the regulatory graph at fixed points cannot admit a negative circuit (Lemma~\ref{lemma:sign_circuit_parity}).
By carefully modifying the map around the antipodal cycle, one can eliminate the
local negative circuits, while maintaining the other states fixed.

We start by setting the notation for the states in the antipodal cycle. We set
\[a^i = \sum_{k=1}^{i-1} e^i,\]
\[a^{n+i} = \overline{a^i},\]
for $i = 1, \dots, n$.
Observe that $a^{i+1} = a^i + e^{i}$, and that the antipodal cycle is defined by $(a^1=\0, a^2,\dots,a^n,a^{n+1},\dots,a^{2n},a^1)$.
We extend the notation for the $e^i$ by setting $e^{i+kn} = e^i$ for $i \in \{1,\dots,n\}$, $k \in \mathbb{Z}$.
Then, we define
\begin{equation*}
  \begin{aligned}
    b^i & = a^i + e^{i+1}, \\
    c^i & = a^i + e^{i+1} + e^{i+2} = b^i + e^{i+2}, \\
    d^i & = a^i + e^{i+1} + e^{i+3} = b^i + e^{i+3}, \\
  \end{aligned}
\end{equation*}
for $i=1, \dots, 2n$.
Set $a^{i+2kn}=a^i$ for $i = \{1, \dots, 2n\}$ and $k \in \mathbb{Z}$, and similarly for the states $b^i$, $c^i$ and $d^i$.
We define the map $f$ as follows:
\begin{equation*}
  \begin{aligned}
    f(a^i) & = a^{i+1}, \\
    f(b^i) & = a^{i+2}, \\
    f(c^i) & = a^{i+4}, \\
    f(d^i) & = a^{i+4}, \\
  \end{aligned}
\end{equation*}
for $i=1, \dots, 2n$, while all other states are fixed.

The map $f$ is well defined, and the asynchronous dynamics it defines admits an antipodal attractive cycle,
whereas its regulatory graph admits no local negative circuits.
The proof is similar to the one presented in~\cite{ruet2017negative}, and is omitted.
The map obtained for $n=6$ is represented in \cref{fig:antip6}.

\begin{figure}[t]
  \centering
  \begin{tikzcd}[ampersand replacement=\&,column sep=small,row sep=small]
    \& \& \mybox{101010\\101100\\111010} \arrow[loop,in=-30,out=30,looseness=1.5] \arrow[d,line width=1.1] \arrow[out=180,in=90,ddl] \& \mybox{110101\\110110\\111101} \arrow[loop,in=-30,out=30,looseness=1.5] \arrow[out=180,in=+20,dl] \arrow[d,line width=1.1] \& \mybox{011010\\111011\\011110} \arrow[loop,in=-30,out=30,looseness=1.5] \arrow[d,line width=1.1] \arrow[out=180,in=+20,dl] \& \& \\
    \& \& 111110 \arrow[r,line width=1.1] \& 111111 \arrow[r,line width=1.1] \& 011111 \arrow[dr,bend left=20,line width=1.1] \& \& \\
     \mybox{010100\\110100\\011000} \arrow[r,line width=1.1] \arrow[dr] \arrow[loop,in=180,out=150,looseness=2] \& 111100 \arrow[ur,bend left=20,line width=1.1] \& \&\& \& 001111 \arrow[d,line width=1.1] \&\mybox{011101\\101111\\101101} \arrow[l,line width=1.1] \arrow[llu,in=0,out=150] \arrow[loop,in=0,out=-30,looseness=2] \\
    \mybox{101001\\101000\\110001} \arrow[r,line width=1.1] \arrow[dr] \arrow[loop,in=180,out=150,looseness=2] \& 111000 \arrow[u,line width=1.1] \& \&\& \& 000111 \arrow[d,line width=1.1] \& \mybox{001110\\010111\\010110} \arrow[l,line width=1.1] \arrow[ul] \arrow[loop,in=0,out=-30,looseness=2] \\
    \mybox{010010\\010000\\100010} \arrow[r,line width=1.1] \arrow[rrd,out=-30,in=180] \arrow[loop,in=180,out=150,looseness=2] \& 110000 \arrow[u,line width=1.1] \& \&\& \& 000011 \arrow[dl,bend left=20,line width=1.1] \& \mybox{100111\\001011\\101011} \arrow[l,line width=1.1] \arrow[ul] \arrow[loop,in=0,out=-30,looseness=2] \\
    \& \& 100000 \arrow[ul,bend left=20,line width=1.1] \& 000000 \arrow[l,line width=1.1] \& 000001 \arrow[l,line width=1.1] \& \& \\
    \& \& \mybox{100001\\000100\\100101} \arrow[loop,out=210,in=150,looseness=1.5] \arrow[u,line width=1.1] \arrow[out=0,in=200,ur] \& \mybox{000010\\001001\\001010} \arrow[loop,out=210,in=150,looseness=1.5] \arrow[u,line width=1.1] \arrow[out=0,ur,in=200] \& \mybox{000101\\010011\\010101} \arrow[loop,out=210,in=150,looseness=1.5] \arrow[u,line width=1.1] \arrow[out=0,uur,in=270] \& \& \\
  \end{tikzcd}
  \caption{Dynamics for a regulatory network with an antipodal attractive cycle and admitting no local negative circuits, for $n=6$.
           The fixed points are omitted.
           The synchronous dynamics coincides for the states in the same box, and is represented with bold arrows.
           The additional edges are asynchronous.
           }\label{fig:antip6}
\end{figure}
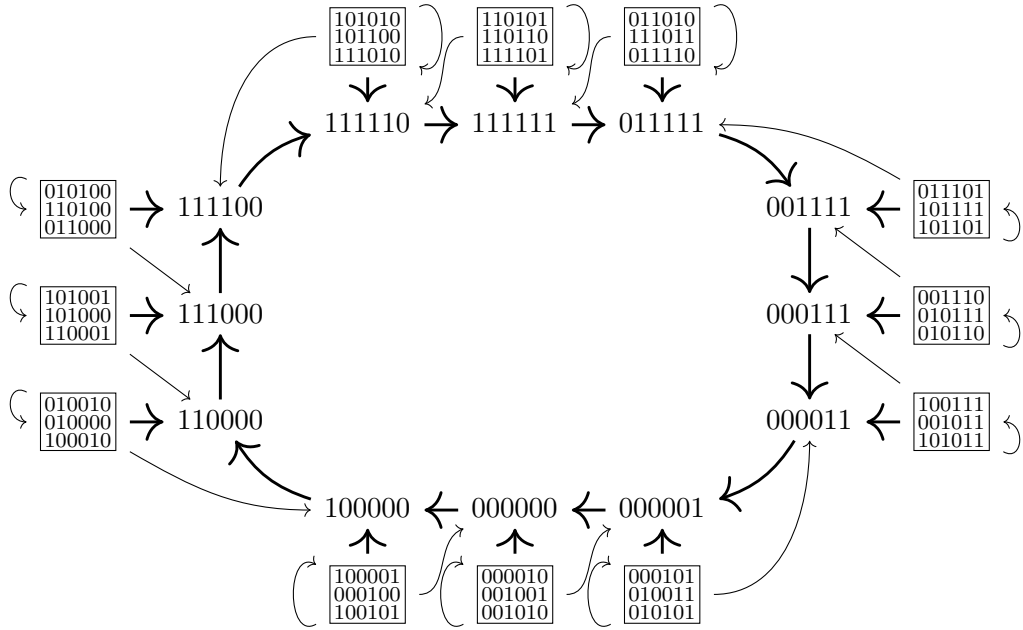

\end{document}